\def\eqref#1{equation~\ref{#1}}
\def\1{\bm{1}}
\DeclareMathAlphabet{\mathsfit}{\encodingdefault}{\sfdefault}{m}{sl}
\SetMathAlphabet{\mathsfit}{bold}{\encodingdefault}{\sfdefault}{bx}{n}
\definecolor{citecolor}{HTML}{0071bc}
\definecolor{rowgray}{gray}{0.97}
\definecolor{darkerrowgray}{gray}{0.93}
\definecolor{clblue}{RGB}{173,216,230}     
\definecolor{mwpink}{RGB}{255,221,244}     
\definecolor{bestgreen}{RGB}{207,242,210}  
\theoremstyle{plain}             
\newtheorem{lemma}{Lemma}
\newtheorem*{restatelemma}{Lemma}
\title{Optimizing What Matters: AUC-Driven\\Learning for Robust Neural Retrieval}
\author{Nima Sheikholeslami \thanks{Corresponding author: \texttt{mahdi.sheikholeslami@servicenow.com}}\\
    ServiceNow \\
\And 
 Erfan Hosseini \\
    ServiceNow \\
\And
 Patrice Bechard \\
    ServiceNow \\
\AND
 Srivatsava Daruru \\
    ServiceNow \\
\And
 Sai Rajeswar \\
    ServiceNow \\
}
\begin{document}

\maketitle

\begin{abstract}
Dual‑encoder retrievers depend on the principle that relevant documents should score higher than irrelevant ones for a given query. Yet the dominant Noise Contrastive Estimation (NCE) objective, which underpins Contrastive Loss, optimizes a softened ranking surrogate that we rigorously prove is fundamentally oblivious to score separation quality and unrelated to AUC. This mismatch leads to poor calibration and suboptimal performance in downstream tasks like retrieval‑augmented generation (RAG). To address this fundamental limitation, we introduce the MW loss, a new training objective that maximizes the Mann‑Whitney U statistic, which is mathematically equivalent to the Area under the ROC Curve (AUC). MW loss encourages each positive-negative pair to be correctly ranked by minimizing binary cross entropy over score differences.  We provide theoretical guarantees that MW loss directly upper-bounds the AoC, better aligning optimization with retrieval goals. We further promote ROC curves and AUC as natural threshold‑free diagnostics for evaluating retriever calibration and ranking quality. Empirically, retrievers trained with MW loss consistently outperform contrastive counterparts in AUC and standard retrieval metrics. Our experiments show that MW loss is an empirically superior alternative to Contrastive Loss, yielding better-calibrated and more discriminative retrievers for high-stakes applications like RAG.
\end{abstract}

\section{Introduction}

Retrieval-augmented generation (RAG) has rapidly become the standard architecture for knowledge-intensive NLP, powering applications such as web search, enterprise question answering, and data analysis copilots~\citep{lewis2020retrieval, gao2023retrieval}. At the heart of any RAG pipeline is a \emph{dense neural retriever}, which provides the critical initial step of selecting relevant passages based on similarity scores. The reliability and accuracy of these scores directly influence the quality of retrieval outcomes, emphasizing the importance of having well-calibrated retrievers to avoid propagating irrelevant or misleading information. 

Dual-encoder models trained with \emph{contrastive objectives}, such as InfoNCE~\citep{infonce}, dominate current retriever training. Popular dual-encoder architectures like DPR~\citep{dpr}, GTR~\citep{gtr}, and E5~\citep{e5} embed queries and passages in a shared vector space, where retrieval occurs by ranking passages via cosine similarity. However, InfoNCE optimizes only the \emph{relative} ordering of positive and negative examples per query, ignoring global score consistency across queries. Consequently, scores produced by contrastively trained retrievers cannot be meaningfully compared or thresholded globally, limiting their suitability for real-world RAG deployments. 

A natural and principled approach for assessing retriever calibration is the \emph{Receiver Operating Characteristic (ROC) curve} and its associated \emph{Area Under the Curve (AUC)}. ROC curves represent the true-positive rate versus the false-positive rate across different score thresholds; a higher AUC (equivalently, lower Area-over-the-Curve, AoC) indicates clearer separation between relevant and irrelevant documents. Crucially, AUC is mathematically identical to the \emph{Mann–Whitney U-statistic}~\citep{mann1947test}, measuring the probability that a randomly chosen relevant document scores higher than an irrelevant one. As illustrated in Figure~\ref{fig:dist1}, optimizing directly for AUC encourages stronger global separation between positive and negative score distributions. 

Motivated by these insights, we hypothesize that directly optimizing an AUC-aligned loss function will yield retrievers with better global score calibration, leading to improved retrieval metrics. To address this, we propose \textbf{Mann–Whitney (MW) loss}, a novel training objective explicitly designed to maximize the AUC. MW loss minimizes binary cross-entropy over pairwise differences between positive and negative document scores, directly encouraging correct ranking across the entire distribution, not just within-query batches. We provide theoretical guarantees that MW loss upper-bounds the AoC, thus explicitly aligning optimization with the ideal retrieval goal. Empirically, Figure~\ref{fig:dist1} demonstrates how MW loss achieves greater separation between positive and negative scores compared to InfoNCE under identical training conditions.

Our experiments on several open-domain benchmarks consistently show that retrievers trained with MW loss outperform their contrastively trained counterparts in calibration metrics (AUC) and conventional retrieval metrics (MRR, nDCG).

Our primary contributions are as follows: 

 \begin{itemize} 

\item We study the problem of global thresholding in dense retrievers and argue the current shortcoming stems from the current training objective used predominently (\cite{dpr}, \cite{gtr}, \cite{e5}). In doing so we expose a blind spot of the INfoNCE(\cite{infonce}) loss.
 
 \item Borrowing from the AUC optimization literature (\cite{gaoconsistency}), we introduce the MW loss, a simple, AUC-aligned objective with provable theoretical bounds on AoC, promoting better global separation between relevant and irrelevant passages. 

 \item We validate MW loss across multiple retrieval benchmarks, demonstrating superior calibration and improved retrieval performance.

 \end{itemize} 

\begin{figure}[t]
\centering
  \includegraphics[width=1\columnwidth]{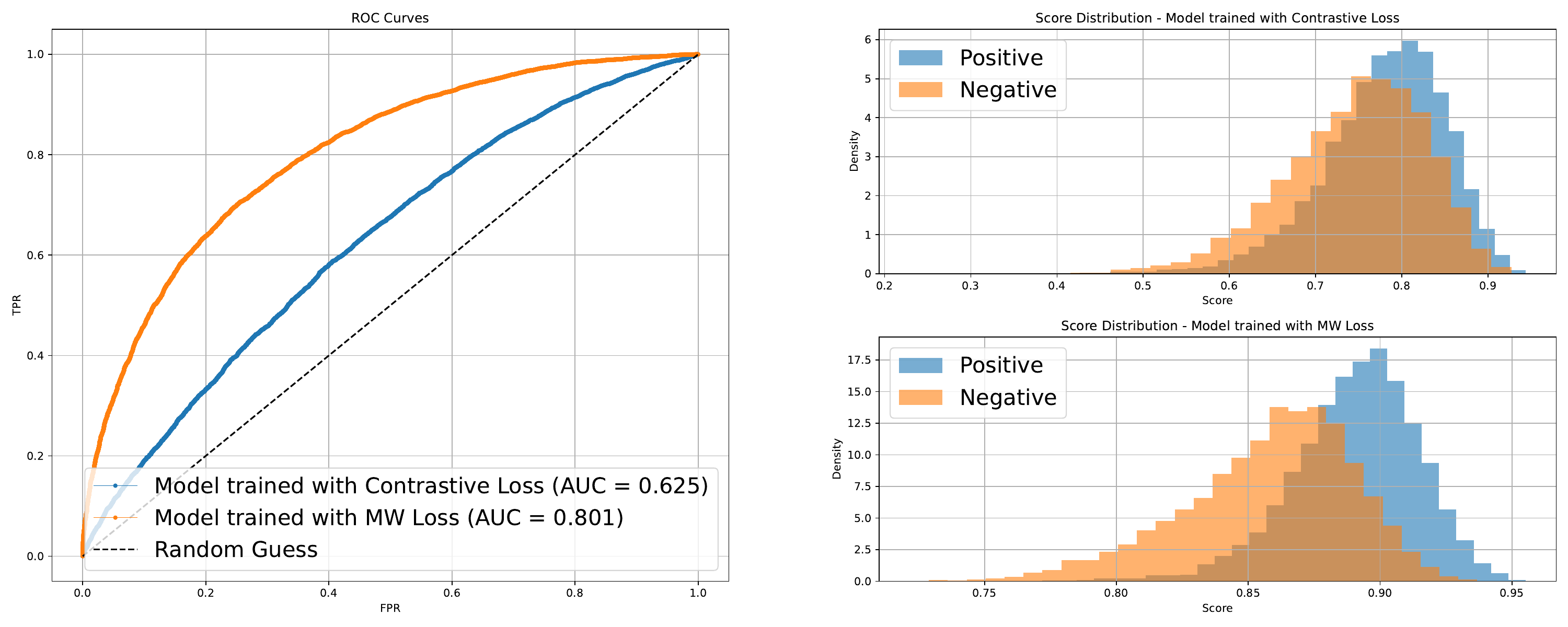}
  \caption{Histogram of positive and negative scores by models trained on NLI dataset using Contrastive loss and MW loss. Model trained with MW loss, creates better separation of scores distribution and its ROC curve dominates the ROC of the model trained with contrastive loss everywhere.}
  \label{fig:dist1}
  \vspace{-0.4cm}
\end{figure}

\vspace{-0.3cm}
\section{Related Work}

\begin{figure}[htbp]
    \centering
    \begin{subfigure}[b]{0.4\textwidth}
        \centering
        \setlength{\tabcolsep}{2.5pt} 
        \begin{tabular}{lcc}
            \toprule
            & \textbf{Contrastive} & \textbf{MW} \\
            \textbf{Step} & \textbf{Loss} & \textbf{Loss} \\
            \midrule
            Embedding & $2B$ & $2B$ \\
            Similarity & $B^2$ & $B^2$ \\
            Comparisons & $B(B-1)$ & $B^2(B{-}1)$ \\
            \bottomrule
        \end{tabular}
        \setlength{\tabcolsep}{6pt} 
        \caption{Computation Comparison}
    \end{subfigure}
    \hfill
    \begin{subfigure}[b]{0.28\textwidth}
        \centering
        \includegraphics[width=\linewidth]{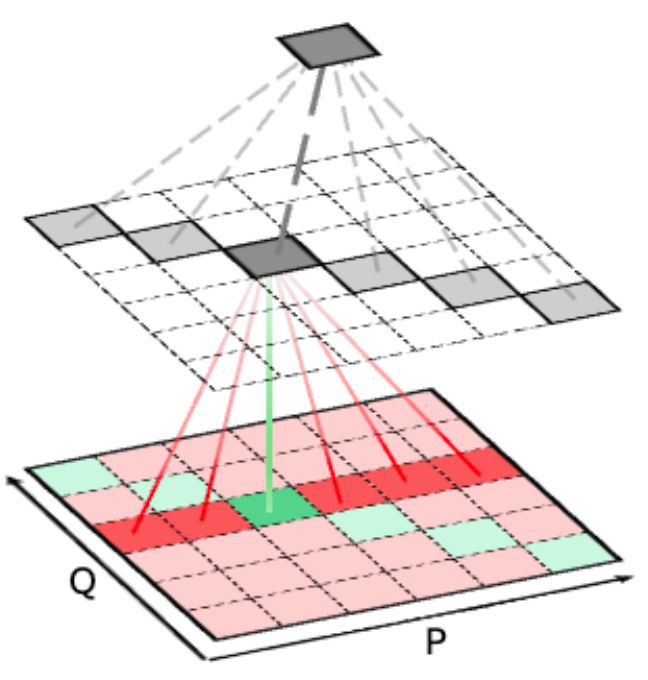}
        \caption{Contrastive Loss}
        \label{subfig:contrastive_loss_diagram}
    \end{subfigure}
    \hfill
    \begin{subfigure}[b]{0.28\textwidth}
        \centering
        \includegraphics[width=\linewidth]{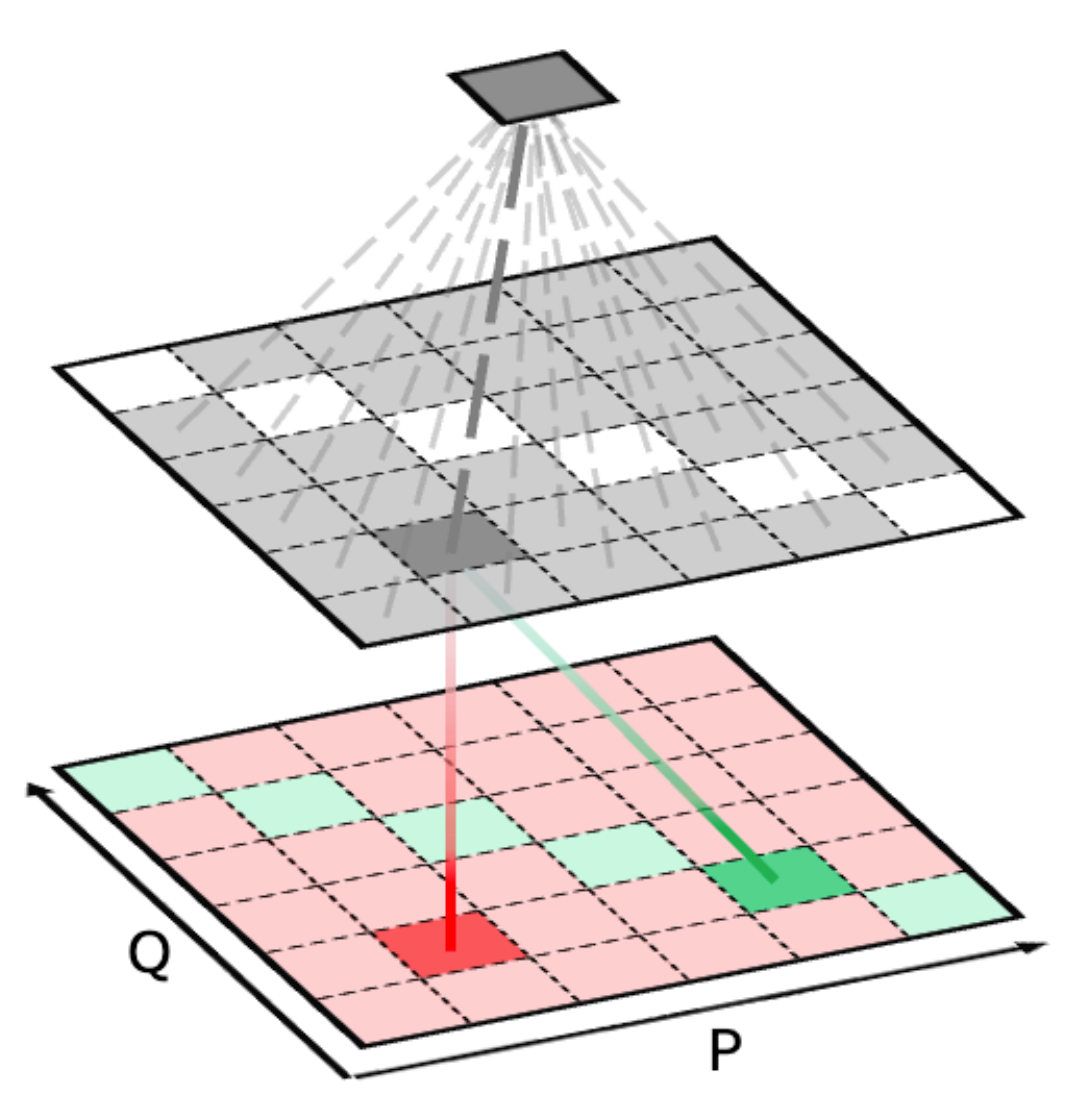}
        \caption{MW Loss}
        \label{subfig:mw_loss_diagram}
    \end{subfigure}

    \caption{\textbf{Visual Comparison of Contrastive Loss vs. MW Loss}. The MW Loss performs more pairwise comparisons without increasing the embedding or similarity computation cost. In Figures \ref{subfig:contrastive_loss_diagram} and \ref{subfig:mw_loss_diagram}, each square in the colored matrix represents a similarity computation between query and passage (green for a positive pair, red for a negative pair). Similarity scores are aggregated differently for each loss function, converging to a grey square above. Each grey square is then summed up to obtain the final loss.}
    \label{fig:infonce-vs-mw}
    \vspace{-0.4cm}
\end{figure}

\textbf{Contrastive representation learning for retrieval.} Noise Contrastive Estimation (NCE) was originally proposed for estimating un\-normalised parametric distributions~\citep{noisecontrastiveestimation}. Its modern variant, InfoNCE, has been rediscovered as a general‑purpose representation learner~\citep{infonce} and shown to maximise a lower bound on mutual information~\citep{mine, infonce}. InfoNCE (and contrastive loss more broadly) now underpins most dense retrievers in NLP, including DPR~\citep{dpr}, GTR~\citep{gtr}, E5~\citep{e5}, and Contriever~\citep{contriever}. Similar ideas have also bridged modalities, as illustrated by CLIP~\citep{clip}, demonstrating the versatility of contrastive objectives for cross‑modal retrieval. We found, \cite{acne}, which has used metric learning representation learning, somewhat close to our idea as it uses pairwise comparisons. However, the pairwise comparisons are conditioned on the query, which leaves the training oblivious to AUC.

\textbf{Enhancing retriever training dynamics.} Multiple studies refine how contrastive retrievers are trained. Momentum encoders such as MoCo~\citep{moco} and BYOL~\citep{byol} update a slow‑moving target network to stabilise learning. The choice and curation of negative examples further influence convergence: hard‑negative mining \citep{hardneg1, hardneg2, moreira2024nv} makes the task more discriminative and accelerates optimisation. Beyond data curriculum, several works modify the loss itself, either framing each pair as a binary classification \citep{siglit, rankt5}, adding separation‑promoting regularisers \citep{samtone, contconfreg}, or combining both signals\citep{arl2}. 

\textbf{AUC‑aligned objectives and rank statistics.} Our work is inspired by classical rank statistics such as the Mann–Whitney U‑test~\citep{mann1947test, ROCAnalysos} and by metric‑learning research that explicitly targets margin‑based separation~\citep{metriclearning}. Optimising triplet loss~\citep{triplet, burges2005learning, rankt5} and, more recently, directly maximising AUROC~\citep{AUROC_optimization} are examples of aligning training objectives with AUC evaluation criteria. Our theretrical guarantees are directly inspired by the work of \cite{gaoconsistency}, which comprehensively studies the loss functions and conditions for their consistency with AUC. In our work we start from contrastive loss and modify it so it becomes consistent with AUC, the way it is measured for retrieval problems.

\section{Problem Statement}

\subsection{Background}

Let $\{s^{+}_1,\dots,s^{+}_{n_{+}}\}$ denote the scores a retriever assigns to relevant passages and
$\{s^{-}_1,\dots,s^{-}_{n_{-}}\}$ the scores for irrelevant passages.
The \emph{Wilcoxon--Mann--Whitney} (MW\,$U$) rank–sum statistic
\citep{mann1947test} counts the number of positive–negative pairs that are correctly ordered:
\begin{equation*}
U \;=\;\sum_{i=1}^{n_{+}}\sum_{j=1}^{n_{-}} \mathbf 1\!\bigl(s^{+}_i > s^{-}_j\bigr)
\end{equation*}

where $\mathbf 1(\cdot)$ is the indicator function.
Normalising $U$ by the total number of pairs yields the \emph{area under the ROC curve} (AUC):
\begin{equation*}
\mathrm{AUC}
\;=\;\frac{U}{n_{+}n_{-}}
\;=\;\Pr\!\left(s^{+} > s^{-}\right)
\end{equation*}

Thus AUC is the probability that a randomly chosen positive sample scores higher than
a randomly chosen negative sample \citep{ROCAnalysos}. This probabilistic interpretation of the AUC paves the way for theoretical study of learning algorithms and their relation to AUC (\cite{gaoconsistency}).

\subsection{Limitations of the Contrastive Loss}

Dense retrievers are trained with the Contrastive Loss \eqref{eq:infonce} to rank the documents based on their relevance for a given query. This is done through learning a metric or score function which assigns a similarity score to each pair. This is achieved through Learning an encoder function which separately encodes the query and document and applying a computationally cheap function on top of these two embeddings. This separation is necessary to allow offline indexing of the documents. This technique was first introduced in \cite{dpr}. In \cite{dpr}, this view of the retriever problem was thought of as a metric learning problem \citep{metriclearning}, and proposed to use a tailord version of the Noise Contrastive Loss \eqref{eq:infonce} as a learn to rank objective.

Contrastive Loss has limitations for learning a general metric function. This can be observed by the fact that the loss function is invariant under shifting scores with a constant value for a given query. This phenomenon is illustrated in Figure \ref{fig:retrieval_failure_cases}. The first query's irrelevant passages have a score higher than the relevant passage of the second query. This problem could have been mitigated by shifting all the scores of the second query $0.03$ points.

\begin{figure}[t]
\centering
\scriptsize
\renewcommand{\arraystretch}{1.2}
\begin{tabular}{|p{3.8cm}|p{8cm}|c|}
\hline
\textbf{Query} & \textbf{Passage (excerpt)} & \textbf{Score} \\
\hline
\parbox[t]{3.8cm}{Why is the sky blue?} 
& \textbf{The sky appears blue because of a phenomenon called Rayleigh scattering...} & 0.85 \\
\cline{2-3}
& At sunrise and sunset, the sky often turns vibrant shades of red, orange, and pink... & 0.83 \\
\cline{2-3}
& The Earth's atmosphere is composed mainly of nitrogen and oxygen... & 0.83 \\
\hline
\parbox[t]{3.8cm}{Why are seasons different between Earth semispheres?} 
& \textbf{Throughout the year, the Sun’s rays strike different parts of Earth more directly...} & 0.82 \\
\cline{2-3}
& Earth takes about 365.25 days to complete one orbit around the Sun... & 0.80 \\
\cline{2-3}
& The equator receives more consistent sunlight throughout the year... & 0.79 \\
\hline
\end{tabular}
\caption{Examples where irrelevant passages receive similar scores to relevant ones, making threshold-based filtering unreliable. Relevant passages are in bold.}
\label{fig:retrieval_failure_cases}
\end{figure}

We will now more formally discuss how AUC is a blind spot for Contrastive Loss. Consider the general form of Contrastive Loss for training dual encoders as below:

Where the $\mathcal{Q}$, $\mathcal{P}^{+}(\cdot \mid q)$ and $\mathcal{P}^{-}(\cdot \mid q)$denote the distributions of queries, distribution of positive and negative passages conditioned on a query respectively.

\begin{equation}
\begin{aligned}
\mathcal{L}_{\text{CL}}
= -\,
\mathbb{E}_{\,q \sim \mathcal{Q},\;
           p^{+} \sim \mathcal{P}^{+}(\cdot \mid q),\;
           \{p^{-}_k\}_{k=1}^{K} \stackrel{\text{i.i.d.}}{\sim} \mathcal{P}^{-}(\cdot \mid q)}
\left[
   \log
   \frac{\exp\!\bigl(\operatorname{sim}(q,p^{+})/\tau\bigr)}
        {\exp\!\bigl(\operatorname{sim}(q,p^{+})/\tau\bigr)
         + \sum_{k=1}^{K}\exp\!\bigl(\operatorname{sim}(q,p^{-}_k)/\tau\bigr)}
\right]
\end{aligned}
\label{eq:infonce}
\end{equation}

\begin{lemma}[Shift‑invariance \& unconstrained AoC for Contrastive Loss]
Let's define:
\begin{equation*}
\ell_\tau(s^{+},S^{-})
  \;=\;
  -\log\!
  \frac{e^{s^{+}/\tau}}
       {e^{s^{+}/\tau}+\sum_{s^{-}\in S^{-}} e^{s/\tau}},
  \qquad \tau>0. 
\end{equation*}

Where $s^{+}$ is a positive score and  $S^{-}$ is a set of negative scores.
With notations of equation \eqref{eq:infonce}, we define $s^{+} = s(q, p^{+})$ and $S^{-}=\{s(q,p^{-})\mid p^{-} \in \{p^{-}_k\}_{k=1}^{K} \}$. Placing these into definition of $\ell_\tau$, the population loss can be rewritten as:

\begin{equation*}
\mathcal{L}_\tau[s]
\;=\;
\mathbb{E}_{\,q \sim \mathcal{Q},\;
p^{+} \sim \mathcal{P}^{+}(\cdot \mid q), \{p^{-}_k\}_{k=1}^{K} \stackrel{\text{i.i.d.}}{\sim} \mathcal{P}^{-}(\cdot \mid q)\; }\!
\Bigl[\,
\ell_\tau\bigl(s^{+},S^{-}\bigr)
\Bigr].
\end{equation*}

\begin{enumerate}
        
  \item \textbf{Shift‑invariance.}  
        For any measurable offset \(g\colon\mathcal{Q}\!\to\!\mathbb{R}\),
        define the shifted scorer
        \begin{equation*}
          s_g(q,d)\;=\;s(q,d)+g(q).
        \end{equation*}
        Then \(\mathcal{L}_\tau[s_g]=\mathcal{L}_\tau[s]\).

  \item \textbf{Arbitrary degradation of AoC.}  
        If \(\lvert s(q,d)\rvert \le M < \infty\), then for every
        \(\varepsilon>0\) there exists an offset \(g\) such that the
        Area‑over‑ROC (AoC) defined as :
        \begin{equation*}
          \operatorname{AoC}[s]
          \;=\;
          \Pr_{q_1, q_1 \sim \mathcal{Q}, p^{+} \sim \mathcal{P}^{+}(.|q_1), p- \sim \mathcal{P}^{-}(.|q_2)}
          \bigl[s(q_1,p^{+}) < s(q_2,p^{-})\bigr]
        \end{equation*}
        satisfies \(\operatorname{AoC}[s_g] \ge 0.5 - \varepsilon\).
        Hence global positive–negative separation can be made
        arbitrarily poor without altering the Contrastive Loss.
\end{enumerate}
\label{lemma:shift_invariance}
\end{lemma}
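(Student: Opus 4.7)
The plan is to handle the two claims separately and to exploit the shift structure of the softmax throughout. For part 1, I would observe that $\ell_\tau$ is invariant under a common additive shift of all its arguments: under the replacement $s\mapsto s_g$, both the numerator $e^{s(q,p^{+})/\tau}$ and every summand $e^{s(q,p^{-}_k)/\tau}$ of the denominator in $\ell_\tau$ pick up the same positive factor $e^{g(q)/\tau}$, which cancels in the ratio. Hence the integrand of $\mathcal{L}_\tau[s_g]$ coincides pointwise with that of $\mathcal{L}_\tau[s]$ on every realization $(q,p^{+},\{p^{-}_k\})$, and the two losses agree.

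For part 2, the strategy is to use the freedom granted by part 1 to construct a query-dependent offset that stratifies $\mathcal{Q}$ so that almost every cross-query comparison is decided by the stratum rather than by the retrieval signal in $s$. Fix an integer $N\geq\lceil 1/(2\varepsilon)\rceil$ and partition $\mathcal{Q}$ into measurable subsets $\mathcal{Q}_1,\dots,\mathcal{Q}_N$ each of $\mathcal{Q}$-probability $1/N$. Set $g(q)=(i-1)(2M+1)$ whenever $q\in\mathcal{Q}_i$. Since $|s(q,d)|\leq M$, the shifted scores coming from $\mathcal{Q}_i$ lie in the interval $[(i-1)(2M+1)-M,\,(i-1)(2M+1)+M]$, and these intervals are pairwise disjoint and strictly increasing in $i$. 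In particular, any shifted score arising from a query in $\mathcal{Q}_j$ with $j>i$ strictly exceeds any shifted score from a query in $\mathcal{Q}_i$, regardless of which document it is computed on.

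Letting $I,J\in\{1,\dots,N\}$ denote the (independent, uniform) stratum indices of $q_1,q_2$, I would then decompose
\begin{equation*}
\operatorname{AoC}[s_g] \;=\; \Pr(I<J) \;+\; \sum_{i=1}^{N}\Pr(I=J=i)\,\Pr\!\bigl(s(q_1,p^{+})<s(q_2,p^{-})\mid I=J=i\bigr),
\end{equation*}
using stratum separation to evaluate the $I\neq J$ contributions ($1$ when $I<J$ and $0$ when $I>J$) and the fact that the offsets cancel on the diagonal. Since $\Pr(I<J)=\binom{N}{2}/N^{2}=(N-1)/(2N)$ and the remaining sum is nonnegative, this yields $\operatorname{AoC}[s_g]\geq (N-1)/(2N)\geq 1/2-1/(2N)\geq 1/2-\varepsilon$, which is the claim.

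The main obstacle I anticipate is purely measure-theoretic: guaranteeing that an equiprobable measurable partition $\{\mathcal{Q}_i\}_{i=1}^{N}$ exists. For atomless $\mathcal{Q}$ this is immediate from the Darboux property of non-atomic measures, but if $\mathcal{Q}$ carries atoms of mass exceeding $1/N$ some care is required. I would resolve this either by enlarging the sample space with an independent $U\sim\operatorname{Unif}[0,1]$ and letting $g$ depend on $(q,U)$ — which does not affect part 1 because the softmax cancellation is pointwise in the shift, so one can then pick a realization of $U$ achieving at least the averaged bound — or by relaxing to a partition whose cells each have mass at most $1/N+\delta$ and absorbing the slack into a slightly larger $N$. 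Either fix preserves the bound in part 2.
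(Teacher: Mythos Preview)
Your proof is correct. Part~1 is essentially identical to the paper's: the common factor $e^{g(q)/\tau}$ cancels from the softmax ratio, so $\ell_\tau$ and hence $\mathcal{L}_\tau$ are unchanged.

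For part~2 you take a genuinely different route. The paper attaches to each query an independent Gaussian offset $g(q)\sim\mathcal{N}(0,\sigma^{2})$ and observes that
\[
s_g(q_1,p^{+})-s_g(q_2,p^{-})=\underbrace{s(q_1,p^{+})-s(q_2,p^{-})}_{|\cdot|\le 2M}+\underbrace{g(q_1)-g(q_2)}_{\sim\,\mathcal{N}(0,2\sigma^{2})},
\]
so as $\sigma\to\infty$ the noise dominates and the (averaged-over-$g$) AoC tends to $1/2$; one then chooses $\sigma$ large enough and implicitly extracts a deterministic realization of $g$. Your stratification is instead explicitly deterministic from the start, avoids any limit, and reduces the bound to the elementary count $\Pr(I<J)=(N-1)/(2N)$. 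It also has the merit that you identify and treat the atomic case for $\mathcal{Q}$, which the paper glosses over (there, $g(q_1)-g(q_2)$ is $\mathcal{N}(0,2\sigma^{2})$ only conditionally on $q_1\neq q_2$, and the passage from a random $g$ to a fixed one is left implicit). The paper's argument is shorter and more intuitive---``swamp the signal with noise''---while yours is more constructive and self-contained.
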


We are now set to propose a loss function which is not oblivious to the AUC metric. We hypothesize that this kind of loss would produce a metric function that in addition to  better score separation, performs better on the retrieval metrics.

\section{Proposed Method}

In order to train a retriever which learns an absolute metric that is not relative to query, we propose a loss function which essentially compares positive and negative pairs one by one. Theoretically this loss provides an upper bound for AoC and we call it \textbf{MW} (\textbf{M}ann-\textbf{W}hitney) loss. This loss has similarity with Contrastive Loss, in that it also makes comparison between positive and negative pairs and encourages the positive score to stand above the negative score. Since in this objective we are comparing every positive pair with every negative pair in a one-on-one fashion, the Cross Entropy formulation of the Contrastive Loss changes to binary classification. Using the same notation as before, MW can be expressed as below:

\begin{equation}
\mathcal{L}_{\text{MW}}
= 
\mathbb{E}_{q_1, q_2 \sim \mathcal{Q}, p^{+}\!\sim \mathcal{P}^{+}(.|q_1), p^{-}\sim \mathcal{P}^{-}(\,\cdot\,|\,q_2)} 
          \bigl[-\log \sigma\!\bigl(s(q_1,p^{+}) - s(q, p^{-})\bigr)\bigr] 
\label{eq:mw_expectation}
\end{equation}

Note that in the definition above, two random (potentially distinct) queries are sampled ($q_1, q_2$). 

The following lemma shows how minimizing \textbf{MW} minimizes the area over the curve and hence remedies the blind spot of the Contrastive Loss. This result borrows from the AUC optimizaion literature (\cite{gaoconsistency}), we provide a simpler minimal proof tailored to our specific setup.

\begin{lemma}[MW upper–bounds AoC]
Let \(D\) be the data distribution over independent positive and
negative instances. Also let us define $\ell_{BCE}$ as:

\begin{equation*}
\ell_{\mathrm{BCE}}(z)=\log\!\bigl(1+e^{-z/\tau}\bigr),\;\tau>0.
\end{equation*}

For a scoring function \(s\colon\mathcal Q\times\mathcal P\to\mathbb R\) and using the same notation as before for the distribution of queries, positive passages and per query and negative passages per query, define \emph{MW} loss as :

\begin{equation*}
  \mathcal{L}_{\mathrm{MW}}[s]
  \;=\;
  \mathbb E_{q_1, q_2 \sim \mathcal{Q}, p^{+}\!\sim \mathcal{P}^{+}(.|q_1), p^{-}\sim \mathcal{P}^{-}(\,\cdot\,|\,q_2)}\!
  \bigl[\,\ell_{\mathrm{BCE}}\bigl(s(q_1,p^{+})-s(q_2,^{-})\bigr)\bigr],
\end{equation*}

Also let us define AoC as in Lemma 1 \footnote{Note that this is the probability of the event that $s(q_1, p^{+}) < s(q_2, p^{-})$ which is equivalent to the expectation of the random variable $I = \mathbf{1} \{s(q_1, p^{+}) < s(q_2, p^{-})\}$. This alternative interpretation is used in the proof of this lemma.}: \[   \operatorname{AoC}[s]   \;=\;   \Pr_{q_1, q_2 \sim \mathcal{Q}, p^{+} \sim \mathcal{P}^{+}(.|q_1), p- \sim \mathcal{P}^{-}(.|q_2)}   \bigl[s(q_1,p^{+}) < s(q_2,p^{-})\bigr] \]
Then for every scoring function \(s\):
\[   \operatorname{AoC}[s]   \;\le\;   \frac{1}{\log 2}\,   \mathcal{L}_{\mathrm{MW}}[s] \]

\end{lemma}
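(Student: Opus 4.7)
The plan is to reduce the claim to a one-dimensional pointwise inequality between the 0--1 loss and the binary cross-entropy surrogate, then take expectation. The reduction works because both $\operatorname{AoC}[s]$ and $\mathcal{L}_{\mathrm{MW}}[s]$ are expectations under the \emph{same} product distribution (two independent queries, an independent positive for the first, an independent negative for the second) of quantities that depend on the score pair only through the signed margin
\[
  z \;=\; s(q_1,p^{+}) - s(q_2,p^{-}).
\]
So I would first rewrite
\(
  \operatorname{AoC}[s] = \mathbb{E}\bigl[\mathbf{1}\{z<0\}\bigr]
\)
(using the alternative interpretation noted in the footnote) and
\(
  \mathcal{L}_{\mathrm{MW}}[s] = \mathbb{E}\bigl[\ell_{\mathrm{BCE}}(z)\bigr],
\)
with the expectations taken over the common distribution of $(q_1,q_2,p^{+},p^{-})$.

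Next, I would establish the pointwise comparison that does all the work, namely
\[
  \mathbf{1}\{z<0\} \;\le\; \frac{1}{\log 2}\,\log\!\bigl(1+e^{-z/\tau}\bigr)\qquad\text{for all } z\in\mathbb{R}.
\]
I would verify this by splitting into two cases. When $z\ge 0$, the right-hand side is nonnegative (since $\log(1+x)\ge 0$ for $x\ge 0$), while the left-hand side is $0$, so the inequality is immediate. When $z<0$, we have $-z/\tau>0$, hence $e^{-z/\tau}>1$, so $\log(1+e^{-z/\tau})>\log 2$, which gives $\tfrac{1}{\log 2}\log(1+e^{-z/\tau})>1=\mathbf{1}\{z<0\}$. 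Thus the surrogate dominates the 0--1 loss everywhere, with tightness approached at $z=0$.

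Finally, taking expectation of the pointwise inequality over the joint distribution of $(q_1,q_2,p^{+},p^{-})$ preserves the inequality by monotonicity of the expectation, yielding
\[
  \operatorname{AoC}[s] \;=\; \mathbb{E}\bigl[\mathbf{1}\{z<0\}\bigr] \;\le\; \frac{1}{\log 2}\,\mathbb{E}\bigl[\ell_{\mathrm{BCE}}(z)\bigr] \;=\; \frac{1}{\log 2}\,\mathcal{L}_{\mathrm{MW}}[s],
\]
which is the desired bound. There is no real obstacle here: the only subtlety is checking that the constant $1/\log 2$ is exactly what is needed to absorb the worst case of the surrogate at the decision boundary $z=0$, and that the distributions under which $\operatorname{AoC}$ and $\mathcal{L}_{\mathrm{MW}}$ are defined match exactly so that the pointwise bound can be integrated directly; both are true by construction in the statement.
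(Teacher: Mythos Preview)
Your proposal is correct and follows essentially the same approach as the paper: establish the pointwise inequality $\mathbf{1}\{z<0\}\le \tfrac{1}{\log 2}\,\ell_{\mathrm{BCE}}(z)$ by a two-case check, then take expectations under the common sampling distribution. Your case analysis is in fact slightly cleaner than the paper's (which informally says ``both sides are $0$'' when $z>0$, though only the left side is), but the structure of the argument is identical.
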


\begin{proof}
For any real \(z\) we have the point‑wise inequality
\begin{equation}
  \mathbf 1\{z\le 0\}
  \;\le\;
  \frac{\log\!\bigl(1+e^{-z/\tau}\bigr)}{\log 2}
  \;=\;
  \frac{\ell_{\mathrm{BCE}}(z)}{\log 2}
\tag{$\ast$}
\end{equation}
Indeed, if \(z>0\) both sides are \(0\);  
otherwise \(z\le 0\) and
\(\ell_{\mathrm{BCE}}(z)\ge\log 2\).

Taking expectations of \((\ast)\) with
\(z = s(q_1,p^{+})-s(q_2,P^{-})\) gives:
\begin{equation*}
  \operatorname{AoC}[s]
  \;=\;
  \mathbb{E}\bigl[\mathbf 1\{z\le 0\}\bigr]
  \;\le\;
  \frac{1}{\log 2}\,
  \mathbb{E}\bigl[\ell_{\mathrm{BCE}}(z)\bigr]
  \;=\;
  \frac{\mathcal{L}_{\mathrm{MW}}[s]}{\log 2}
\end{equation*}

\medskip

\end{proof}

Thus minimizing the MW loss \emph{directly maximizes} AUC.

Now we shall introduce some notation to express how this loss is approximated within one batch of training. We consider the general setting of training retrievers with a batch size $B$, where each batch contains a set of $B$ queries $\{q_1, q_2, ..., q_b\}$ and one relevant (positive) passage per query $\{p^{+}_1, p^{+}_2, ..., p^{+}_B\}$ and $H$ additional hard negative passages  $\{ p^{-}_{i, 1}, p^{-}_{i, 2}, ... p^{-}_{i, H} \}$ per query where $1\leq i \leq B$. Let $s^{+}_{i} = sim( E(q_i), E(p^{+}_i))$ to denote the similarity score between the query ($q_i$) and it's corresponding positive document. Also let us by $S_{i}^{-}$ denote the set which has the similarity score between the $q_i$ and all passages in the batch except for the relevant passage ($p^{+}_{i}$). The set $S^{-}_{i}$ would be of size $HB + (B-1)$ which is comprised of $HB$ elements for all the hard negatives and $B-1$ for the positive passages of the other queries. Finally, we use the set $S^{-}= \bigcup_{i=1}^{B} S^{-}_{i}$ to denote the set of all negative scores in one batch of training.

With this notation, our proposed loss function would look as below:

\begin{equation}
\mathcal{L}_{\text{MW}}
= -\frac{1}{B} \sum_{i=1}^{B} \sum_{s^{-}_k \in {S^{-}}} 
      \log \sigma(s^+_i - s^{-}_k)
\label{eq:pbce_batch}
\end{equation}

For comparison we add the Contrastive Loss below using the same notation:

\begin{equation}
\mathcal{L}_{CL} =
- \frac{1}{B} \sum_{i=1}^{B}
\log \left(
\frac{
\exp(s^+_i / \tau)
}{
\exp(s^+_i / \tau)
+ \sum\limits_{s \in S^{-}_i} \exp(s / \tau)
}
\right)
\label{eq:contrastive_hard}
\end{equation}

Figure \ref{fig:infonce-vs-mw} visualizes the compuational differences between the Contrastive Loss and the MW Loss in a scenario without hard negatives. $B^2$ similarity scores need to be calculated for both methods. MW Loss compares every positive pair with all non-diagonal elements ($B(B-1)$) whereas contrastive loss compares it only with negative scores within the same row ($B$).

\section{Experiments}
In this section, we experimentally evaluate our novel Mann-Whitney Loss (MW-Loss), inspired by the Mann-Whitney U statistic, against the conventional Contrastive Loss (CL). We systematically assess both in-distribution performance and out-of-distribution generalization across diverse datasets. To the best of our knowledge, this is the first work to propose training dense neural retrievers in natural language to learn a global metric that is not conditioned on the query. All previous methods differ in strategies such as backbone architecture, training data, and negative sampling, making direct head-to-head comparisons difficult and potentially misleading. We therefore take CL, the predominant objective underlying these methods, as a representative baseline and focus our experiments on comparing MW directly against CL.

\subsection{Experimental Setup}
We initialize all experiments with pre-trained foundational models to isolate the specific effects of our proposed loss function, eliminating confounding variables. We employ three model architectures across three different sizes, small base and large: MiniLM, XLM-RoBERTa-Base and XLM-RoBERTa-Large \citep{minilm, bert, roberta}. For training, we utilize four distinct datasets: Natural Questions (NQ)~\citep{nq}, Natural Language Inference (NLI)~\citep{hardneg2}, SQUAD \cite{squad}, and MS MARCO~\citep{MSmarco}, each representing different language understanding tasks.

We train each model for a maximum of 20 epochs, with early stopping when the evaluation loss plateaus. We train using a temperature of $0.01$ for both Contrastive loss and MW loss functions across all experiments using 5 hard negatives. For all evaluations, we track three key metrics: Area Under the ROC Curve (AUC), Mean Reciprocal Rank at 10 (MRR@10)~\citep{voorhees1999trec}, and normalized Discounted Cumulative Gain at 10 (nDCG@10)~\citep{jarvelin2002cumulated}. All experiments were carried out on two NVIDIA A100 GPU with batch sizes of 16, 32 and 64 for large, base and small model variants, respectively. In all our trainings we used 500 steps of warmup.

For the computation of AUC, we employ a comprehensive evaluation protocol: for each query, we consider the scores of positive passages as positive samples. To obtain reliable negative samples, we compute the similarity score between the query and all passages in the corpus, selecting the top $500$ highest-scoring passages (excluding known positives) as hard negatives. The positive and negative scores from all queries are then aggregated into a single pool, from which we calculate the final AUC metric. This approach ensures that our AUC evaluation captures model performance across the most challenging cases in the entire dataset.

\begin{table}[t]
\centering
\caption{Retrieval performance of models trained with contrastive (CL) and Mann-Whitney (MW) loss across datasets. For each dataset, we report AUC, MRR, and nDCG metrics. Average column shows mean across datasets.}
\renewcommand{\arraystretch}{1.12}
\footnotesize
\setlength{\tabcolsep}{3.8pt}
\begin{tabular}{l|ccc|ccc|ccc|ccc
|>{\columncolor{darkerrowgray}}c
 >{\columncolor{darkerrowgray}}c
 >{\columncolor{darkerrowgray}}c
 }
\toprule
\multirow{2}{*}{%
  \begin{tabular}{@{}c@{}}
    Data \\            
    Loss                  
  \end{tabular}%
} &
\multicolumn{3}{c|}{\textbf{NLI}} &
\multicolumn{3}{c|}{\textbf{NQ}} &
\multicolumn{3}{c|}{\textbf{SQuAD}} &
\multicolumn{3}{c|}{\textbf{MSMarco}} &
\multicolumn{3}{c}{\cellcolor{darkerrowgray}\textbf{Average}} \\
& {\scriptsize AUC} & {\scriptsize MRR} & {\scriptsize nDCG} & {\scriptsize AUC} & {\scriptsize MRR} & {\scriptsize nDCG} & {\scriptsize AUC} & {\scriptsize MRR} & {\scriptsize nDCG} & {\scriptsize AUC} & {\scriptsize MRR} & {\scriptsize nDCG} & {\scriptsize AUC} & {\scriptsize MRR} & {\scriptsize nDCG} \\
\midrule
\multicolumn{3}{c} {} & \multicolumn{9}{c}{\textbf{MINILM}} & & \multicolumn{3}{>{\columncolor{darkerrowgray}}c}{} \\
\cellcolor{clblue!40}
CL & 0.67 & 0.24 & 0.29 & 0.70 & 0.35 & 0.46 & 0.72 & \cellcolor{bestgreen}\textbf{0.39} & \cellcolor{bestgreen}\textbf{0.48} & 0.67 & 0.36 & \cellcolor{bestgreen}\textbf{0.44} & 0.85 & 0.52 & 0.58 \\
\cellcolor{mwpink!40} MW & \cellcolor{bestgreen}\textbf{0.81} & \cellcolor{bestgreen}\textbf{0.35} & \cellcolor{bestgreen}\textbf{0.43} & \cellcolor{bestgreen}\textbf{0.84} & \cellcolor{bestgreen}\textbf{0.36} & \cellcolor{bestgreen}\textbf{0.43} & \cellcolor{bestgreen}\textbf{0.85} & 0.37 & 0.44 & \cellcolor{bestgreen}\textbf{0.81} & 0.34 & 0.43 & \cellcolor{bestgreen}\textbf{0.91} & \cellcolor{bestgreen}\textbf{0.53} & 0.58 \\
\midrule
\multicolumn{3}{c} {} & \multicolumn{9}{c}{\textbf{XLM-RoBERTa-Base}} & & \multicolumn{3}{>{\columncolor{darkerrowgray}}c}{} \\
\cellcolor{clblue!40}
CL & 0.69 & 0.27 & 0.32 & 0.70 & 0.36 & 0.44 & 0.78 & 0.38 & 0.47 & 0.69 & 0.27 & 0.32 & 0.84 & 0.53 & 0.6 \\
\cellcolor{mwpink!40}
MW & \cellcolor{bestgreen}\textbf{0.84} & \cellcolor{bestgreen}\textbf{0.37} & \cellcolor{bestgreen}\textbf{0.45} & \cellcolor{bestgreen}\textbf{0.87} & \cellcolor{bestgreen}\textbf{0.36} & \cellcolor{bestgreen}\textbf{0.45} & \cellcolor{bestgreen}\textbf{0.86} & 0.36 & 0.45 & 0.84 & 0.37 & \cellcolor{bestgreen}\textbf{0.45} & \cellcolor{bestgreen}\textbf{0.93} & \cellcolor{bestgreen}\textbf{0.55} & \cellcolor{bestgreen}\textbf{0.61}\\
\midrule
\multicolumn{3}{c} {} & \multicolumn{9}{c}{\textbf{XLM-RoBERTa-Large}} & & \multicolumn{3}{>{\columncolor{darkerrowgray}}c}{} \\
\cellcolor{clblue!40}
CL & 0.73 & 0.31 & 0.37 & 0.79 & 0.38 & 0.48 & 0.65 & 0.31 & 0.38 & 0.73 & 0.31 & 0.37 & 0.84 & 0.54 & 0.58 \\
\cellcolor{mwpink!40}
MW & \cellcolor{bestgreen}\textbf{0.88} & 0.39 & \cellcolor{bestgreen}\textbf{0.47} & \cellcolor{bestgreen}\textbf{0.89} & \cellcolor{bestgreen}\textbf{0.37} & \cellcolor{bestgreen}\textbf{0.46} & \cellcolor{bestgreen}\textbf{0.92} & \cellcolor{bestgreen}\textbf{0.41} & \cellcolor{bestgreen}\textbf{0.50} & \cellcolor{bestgreen}\textbf{0.87} & \cellcolor{bestgreen}\textbf{0.39} & \cellcolor{bestgreen}\textbf{0.47} & \cellcolor{bestgreen}\textbf{0.96} & \cellcolor{bestgreen}\textbf{0.59} & \cellcolor{bestgreen}\textbf{0.64} \\
\bottomrule
\end{tabular}
\label{tab:retrieval_results}
\end{table}

\subsection{In-Distribution Performance}
First, we evaluate models trained and tested on the same dataset to establish baseline performance. For each dataset-model combination, we report the performance across our tracked metrics to quantify the effectiveness of both loss functions in the standard setting.

Based on Table \ref{tab:retrieval_results}, the comparison between models trained with contrastive loss (CL) and MW loss reveals several key insights. The MiniLM and Roberta-base models using MW loss perform on par (on average) with CL across retrieval metrics (MRR and nDCG), but notably improves upon AUC scores across all datasets. For the RoBERTa-Large model, we observe improvements across all metrics when using the MW loss. These results suggest that Mann-Whitney loss provides a consistent advantage for ranking performance, with in-domain benefits becoming more pronounced as model size increases, having a higher capacity for learning the harder objective.

\subsection{Cross-Dataset Generalization}
To rigorously evaluate generalization capabilities, we trained models on the Natural Language Inference (NLI) dataset—selected for its extensive size and semantic diversity—and assessed their transfer performance across the challenging BEIR benchmark suite \cite{Thakur2021Beir}. This comprehensive evaluation protocol enabled systematic analysis of how effectively MW-Loss and contrastive learning (CL) loss facilitate knowledge transfer to unseen domains and tasks.

As shown in Table \ref{tab:OOD_retrieval_results}, MW-Loss consistently outperforms conventional CL loss across multiple retrieval metrics. These empirical results align with our theoretical guarantees, confirming that directly optimizing the AUC by minimizing MW loss translates to practical advantages in both zero-shot and in-domain generalization scenarios.

Figure \ref{fig:metric_gains} illustrates the performance gains for each evaluation metric across datasets. Interestingly, we observe that these improvements remain consistent regardless of model size and type, suggesting that MW-Loss provides representational benefits independent of parameter count.

\begin{table}[t]
\centering
\small
\caption{\textbf{Comparison of CL and MW losses across unseen BEIR retrieval and SQUAD datasets}. Performance is measured using AUC, MRR, and nDCG metrics for models of different sizes and types trained on the NLI dataset. The results show that MW outperforms CL in the majority of datasets and metrics.}
\setlength{\tabcolsep}{5pt}
\renewcommand{\arraystretch}{1.1}
\begin{tabular}{>{\bfseries}l l|ccc|ccc|ccc}
\toprule
Dataset & \textbf{Loss} &
\multicolumn{3}{c|}{\textbf{MiniLM}} &
\multicolumn{3}{c|}{\textbf{XLM-RoBERTa-Base}} &
\multicolumn{3}{c}{\textbf{XLM-RoBERTa-Large}} \\
& & AUC & MRR & nDCG & AUC & MRR & nDCG & AUC & MRR & nDCG  \\
\midrule
\rowcolor{white}
NFCorpus & \cellcolor{clblue!30}CL & 0.49 & 0.26 & 0.16 & 0.51 & 0.27 & 0.18 & 0.56 & 0.28 & 0.19 \\
         & \cellcolor{mwpink!40}MW & \cellcolor{bestgreen}\textbf{0.50} & 0.26 & \cellcolor{bestgreen}\textbf{0.17} & \cellcolor{bestgreen}\textbf{0.55} & \cellcolor{bestgreen}\textbf{0.30} & \cellcolor{bestgreen}\textbf{0.20} & \cellcolor{bestgreen}\textbf{0.58} & 
         \cellcolor{bestgreen}0.31 & 
         \cellcolor{bestgreen}0.20 \\
\midrule
Trec-Covid & \cellcolor{clblue!40}CL & \cellcolor{bestgreen}\textbf{0.20} & 0.34 & 0.19 & 0.24 & 0.37 & 0.25 & 0.29 & 0.39 & 0.24 \\
           & \cellcolor{mwpink!40}MW & 0.16 & \cellcolor{bestgreen}\textbf{0.44} & \cellcolor{bestgreen}\textbf{0.22} & \cellcolor{bestgreen}\textbf{0.26} & \cellcolor{bestgreen}\textbf{0.43} & \cellcolor{bestgreen}\textbf{0.31} & \cellcolor{bestgreen}\textbf{0.28} & \cellcolor{bestgreen}\textbf{0.43} & \cellcolor{bestgreen}\textbf{0.38} \\
\midrule
FiQA & \cellcolor{clblue!40}CL & 0.44 & 0.12 & 0.09 & 0.43 & 0.14 & 0.10 & \cellcolor{bestgreen}\textbf{0.52} & 0.18& 0.12 \\
     & \cellcolor{mwpink!40}MW & \cellcolor{bestgreen}\textbf{0.48} & \cellcolor{bestgreen}\textbf{0.16} & \cellcolor{bestgreen}\textbf{0.11} & \cellcolor{bestgreen}\textbf{0.49} & \cellcolor{bestgreen}\textbf{0.17} & \cellcolor{bestgreen}\textbf{0.12} & 0.51 & 0.18 & \cellcolor{bestgreen}\textbf{0.13} \\
\midrule
SQUAD & \cellcolor{clblue!40}CL & 0.80 & 0.40 & 0.44 & 0.85 & 0.47 & 0.51 & 0.86 & 0.47 & 0.52 \\
      & \cellcolor{mwpink!40}MW & \cellcolor{bestgreen}\textbf{0.86} & \cellcolor{bestgreen}\textbf{0.41} & \cellcolor{bestgreen}\textbf{0.45} & \cellcolor{bestgreen}\textbf{0.90} & \cellcolor{bestgreen}\textbf{0.52} & \cellcolor{bestgreen}\textbf{0.90} & \cellcolor{bestgreen}\textbf{0.92} & \cellcolor{bestgreen}\textbf{0.57} & \cellcolor{bestgreen}\textbf{0.62} \\
\midrule
Hotpot QA & \cellcolor{clblue!40}CL & 0.66 & 0.62 & 0.46 & 0.64 & 0.55 & 0.41 & 0.66 & \cellcolor{bestgreen}\textbf{0.63} & \cellcolor{bestgreen}\textbf{0.48} \\
          & \cellcolor{mwpink!40}MW & \cellcolor{bestgreen}\textbf{0.69} & \cellcolor{bestgreen}\textbf{0.62} & \cellcolor{bestgreen}\textbf{0.47} & \cellcolor{bestgreen}\textbf{0.70} & \cellcolor{bestgreen}\textbf{0.56} & \cellcolor{bestgreen}\textbf{0.46} & \cellcolor{bestgreen}\textbf{0.72} & 0.59 & 0.46 \\
\midrule
Touche-2020 & \cellcolor{clblue!40}CL & \cellcolor{bestgreen}\textbf{0.27} & \cellcolor{bestgreen}\textbf{0.12} & \cellcolor{bestgreen}\textbf{0.07} & 0.24 & 0.08 & 0.05 & 0.37 & 0.12 & 0.05 \\
            & \cellcolor{mwpink!40}MW & 0.25 & 0.09 & 0.06 & \cellcolor{bestgreen}\textbf{0.26} & \cellcolor{bestgreen}\textbf{0.09} & \cellcolor{bestgreen}\textbf{0.07} & \cellcolor{bestgreen}\textbf{0.39} & \cellcolor{bestgreen}\textbf{0.18} & \cellcolor{bestgreen}\textbf{0.07} \\
\midrule
ArguAna & \cellcolor{clblue!40}CL & 0.71 & 0.15 & 0.21 & 0.83 & \cellcolor{bestgreen}\textbf{0.17} & 0.24 & 0.86 & \cellcolor{bestgreen}\textbf{0.19} & 0.27 \\
        & \cellcolor{mwpink!40}MW & \cellcolor{bestgreen}\textbf{0.77} & \cellcolor{bestgreen}\textbf{0.16} & \cellcolor{bestgreen}\textbf{0.22} & \cellcolor{bestgreen}\textbf{0.86} & 0.15 & \cellcolor{bestgreen}\textbf{0.27} & \cellcolor{bestgreen}\textbf{0.90} & 0.17 & \cellcolor{bestgreen}\textbf{0.31} \\
\midrule
CQADupStack & \cellcolor{clblue!40}CL & 0.38 & 0.19 & 0.17 & 0.26 & 0.13 & 0.11 & 0.45 & 0.28 & 0.25 \\
            & \cellcolor{mwpink!40}MW & \cellcolor{bestgreen}\textbf{0.40} & \cellcolor{bestgreen}\textbf{0.20} & \cellcolor{bestgreen}\textbf{0.18} & \cellcolor{bestgreen}\textbf{0.33} & \cellcolor{bestgreen}\textbf{0.19} & \cellcolor{bestgreen}\textbf{0.16} & \cellcolor{bestgreen}\textbf{0.52} & \cellcolor{bestgreen}\textbf{0.33} & \cellcolor{bestgreen}\textbf{0.31} \\
\midrule
Quora & \cellcolor{clblue!40}CL & 0.94 & 0.86 & 0.85 & 0.94 & 0.83 & 0.81 & 0.97 & 0.87 & 0.86 \\
      & \cellcolor{mwpink!40}MW & \cellcolor{bestgreen}\textbf{0.97} & \cellcolor{bestgreen}\textbf{0.86} & \cellcolor{bestgreen}\textbf{0.85} & \cellcolor{bestgreen}\textbf{0.96} & \cellcolor{bestgreen}\textbf{0.86} & \cellcolor{bestgreen}\textbf{0.84} & \cellcolor{bestgreen}\textbf{0.98} & \cellcolor{bestgreen}\textbf{0.88} & 0.86 \\
\midrule
DBPedia & \cellcolor{clblue!40}CL & 0.40 & 0.36 & 0.24 & 0.32 & 0.23 & 0.14 & 0.56 & 0.33 & 0.23 \\
        & \cellcolor{mwpink!40}MW & \cellcolor{bestgreen}\textbf{0.42} & \cellcolor{bestgreen}\textbf{0.36} & \cellcolor{bestgreen}\textbf{0.24} & \cellcolor{bestgreen}\textbf{0.46} & \cellcolor{bestgreen}\textbf{0.30} & \cellcolor{bestgreen}\textbf{0.20} & \cellcolor{bestgreen}\textbf{0.59} & \cellcolor{bestgreen}\textbf{0.38} & \cellcolor{bestgreen}\textbf{0.29} \\
\midrule
Scidocs & \cellcolor{clblue!40}CL & 0.31 & 0.10 & 0.05 & 0.29 & 0.12 & 0.06 & 0.42 & 0.17 & 0.07 \\
        & \cellcolor{mwpink!40}MW & \cellcolor{bestgreen}\textbf{0.38} & \cellcolor{bestgreen}\textbf{0.12} & \cellcolor{bestgreen}\textbf{0.06} & \cellcolor{bestgreen}\textbf{0.30} & 0.12 & 0.06 & \cellcolor{bestgreen}\textbf{0.43} & \cellcolor{bestgreen}\textbf{0.18} & \cellcolor{bestgreen}\textbf{0.08} \\
\midrule
Scifact & \cellcolor{clblue!40}CL & 0.61 & 0.20 & 0.22 & 0.68 & 0.26 & 0.28 & 0.66 & 0.36 & 0.39 \\
        & \cellcolor{mwpink!40}MW & \cellcolor{bestgreen}\textbf{0.63} & \cellcolor{bestgreen}\textbf{0.21} & \cellcolor{bestgreen}\textbf{0.23} & \cellcolor{bestgreen}\textbf{0.70} & \cellcolor{bestgreen}\textbf{0.28} & \cellcolor{bestgreen}\textbf{0.29} & \cellcolor{bestgreen}\textbf{0.77} & \cellcolor{bestgreen}\textbf{0.39} & \cellcolor{bestgreen}\textbf{0.41} \\
\midrule
Fever & \cellcolor{clblue!40}CL & 0.47 & 0.18 & 0.19 & 0.64 & 0.32 & 0.35 & 0.73 & 0.50 & 0.52 \\
      & \cellcolor{mwpink!40}MW & \cellcolor{bestgreen}\textbf{0.54} & \cellcolor{bestgreen}\textbf{0.24} & \cellcolor{bestgreen}\textbf{0.25} & \cellcolor{bestgreen}\textbf{0.73} & \cellcolor{bestgreen}\textbf{0.44} & \cellcolor{bestgreen}\textbf{0.45} & \cellcolor{bestgreen}\textbf{0.75} & \cellcolor{bestgreen}\textbf{0.53} & \cellcolor{bestgreen}\textbf{0.54} \\
\midrule
Climate-Fever & \cellcolor{clblue!40}CL & 0.50 & 0.16 & 0.12 & \cellcolor{bestgreen}\textbf{0.62} & \cellcolor{bestgreen}\textbf{0.32} & \cellcolor{bestgreen}\textbf{0.23} & 0.67 & 0.38 & 0.29 \\
              & \cellcolor{mwpink!40}MW & \cellcolor{bestgreen}\textbf{0.53} & \cellcolor{bestgreen}\textbf{0.19} & \cellcolor{bestgreen}\textbf{0.13} & 0.58 & 0.27 & 0.20 & \cellcolor{bestgreen}\textbf{0.70} & \cellcolor{bestgreen}\textbf{0.40} & \cellcolor{bestgreen}\textbf{0.32} \\
\midrule
\rowcolor{darkerrowgray}
\bfseries Average & \cellcolor{clblue!30}CL & 0.51 & 0.29 & 0.25 & 0.54 & 0.30 & 0.27 & 0.61 & 0.37 & 0.32 \\
\rowcolor{darkerrowgray}
\bfseries         & \cellcolor{mwpink!40}MW & \cellcolor{bestgreen}\textbf{0.54} & \cellcolor{bestgreen}\textbf{0.31} & \cellcolor{bestgreen}\textbf{0.26} & \cellcolor{bestgreen}\textbf{0.58} & \cellcolor{bestgreen}\textbf{0.33} & \cellcolor{bestgreen}\textbf{0.32} & \cellcolor{bestgreen}\textbf{0.65} & \cellcolor{bestgreen}\textbf{0.39} & \cellcolor{bestgreen}\textbf{0.36} \\

\bottomrule \\
\end{tabular}

\label{tab:OOD_retrieval_results}
\end{table}

\begin{figure}[t]
    \centering
    \includegraphics[width=\textwidth]{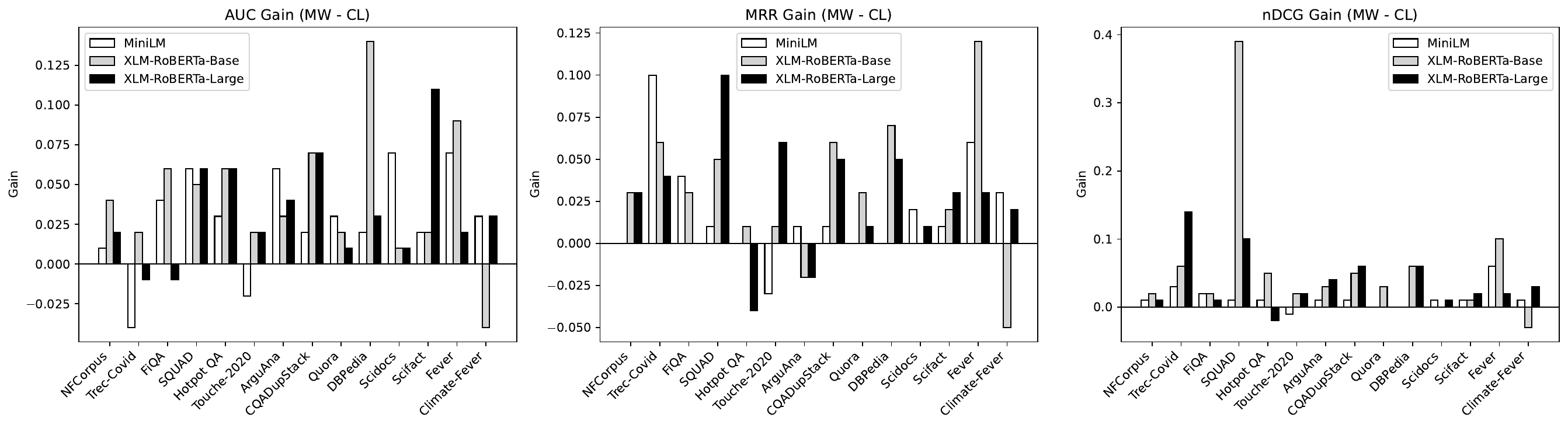}
    \caption{The gain in performance of using MW over CL for across metrics and models. The plots show AUC gain (left), MRR gain (center), and nDCG gain (right). Positive values indicate superior performance of MW compared to CL.}
    \label{fig:metric_gains}
\end{figure}

\section{Conclusion}

In this work, we revisited the foundational objective of dense retriever training and exposed a critical shortcoming in the widely used Contrastive Loss: its inability to promote globally calibrated scores, due to its invariance to query-specific score shifts. This misalignment hinders the application of contrastively trained retrievers in real-world scenarios that demand consistent and thresholdable relevance scores. To remedy this, we proposed the Mann–Whitney (MW) loss, a simple yet principled objective that directly maximizes the Area Under the ROC Curve (AUC) by minimizing binary cross-entropy over pairwise score differences. We provided theoretical guarantees that MW loss upper-bounds the Area-over-the-Curve (AoC), addressing a key blind spot in contrastive learning.

Empirical evaluations across both in-distribution and out-of-distribution benchmarks confirm the effectiveness of MW loss: models trained with MW consistently outperform their contrastive counterparts not only in AUC but also in standard retrieval metrics (on average) such as MRR and nDCG. Notably, these gains persist across model sizes and evaluation domains, highlighting MW loss as a broadly applicable and easily integrable alternative for training robust neural retrievers. We hypothesize that these improvements come as a result of targeting a harder objective, learning a global metric, which opens up the path for model to learn a more generalizable solution, this hypothesize is partially confirmed by the fact that MW suffers from a slower convergence compared to CL (see appendix C). 
We hope this work motivates a rethinking of retrieval objectives and encourages further exploration into calibration-aware learning for dense retrieval systems.

\section{Limitations}

While MW loss is a principled and empirically strong alternative to contrastive objectives for dense retriever training, several limitations and open questions remain.

First, although MW loss is defined as a population-level objective, it is approximated within batches. This raises questions about how best to approximate the population loss within limited batch data and whether emphasizing hard negatives improves retrieval performance, both of which may impact training dynamics and generalization.

Second, a gradient level analysis of the MW loss and impact hard negatives could provide theoretical insight into its empirical advantages.

Third, while prior work has studied the geometry of representations under contrastive training~\citep{repdeg}, analyzing MW loss’s impact on representation geometry is an open direction.

Lastly, although we show strong results even with modest computational budgets and model sizes, state-of-the-art retrievers like E5~\citep{e5} are trained using extensive resources across diverse domains. Exploring the full potential of MW loss in such large-scale training regimes—including domain adaptation, multilingual settings, and production deployment—is a promising direction for future work.

\clearpage
\bibliographystyle{iclr2026_conference}
\bibliography{iclr2026_conference}

\begin{thebibliography}{38}
\providecommand{\natexlab}[1]{#1}
\providecommand{\url}[1]{\texttt{#1}}
\expandafter\ifx\csname urlstyle\endcsname\relax
  \providecommand{\doi}[1]{doi: #1}\else
  \providecommand{\doi}{doi: \begingroup \urlstyle{rm}\Url}\fi

\bibitem[Bajaj et~al.(2016)Bajaj, Campos, Craswell, Deng, Gao, Liu, Majumder, McNamara, Mitra, Nguyen, et~al.]{MSmarco}
Payal Bajaj, Daniel Campos, Nick Craswell, Li~Deng, Jianfeng Gao, Xiaodong Liu, Rangan Majumder, Andrew McNamara, Bhaskar Mitra, Tri Nguyen, et~al.
\newblock Ms marco: A human generated machine reading comprehension dataset.
\newblock \emph{arXiv preprint arXiv:1611.09268}, 2016.

\bibitem[Belghazi et~al.(2018)Belghazi, Baratin, Rajeswar, Ozair, Bengio, Courville, and Hjelm]{mine}
Mohamed~Ishmael Belghazi, Aristide Baratin, Sai Rajeswar, Sherjil Ozair, Yoshua Bengio, Aaron Courville, and R~Devon Hjelm.
\newblock Mine: mutual information neural estimation.
\newblock \emph{arXiv preprint arXiv:1801.04062}, 2018.

\bibitem[Burges et~al.(2005)Burges, Shaked, Renshaw, Lazier, Deeds, Hamilton, and Hullender]{burges2005learning}
Chris Burges, Tal Shaked, Erin Renshaw, Ari Lazier, Matt Deeds, Nicole Hamilton, and Greg Hullender.
\newblock Learning to rank using gradient descent.
\newblock In \emph{Proceedings of the 22nd international conference on Machine learning}, pp.\  89--96, 2005.

\bibitem[Cai et~al.(2022)Cai, Guo, Fan, Ai, Zhang, and Cheng]{hardneg1}
Yinqiong Cai, Jiafeng Guo, Yixing Fan, Qingyao Ai, Ruqing Zhang, and Xueqi Cheng.
\newblock Hard negatives or false negatives: Correcting pooling bias in training neural ranking models.
\newblock In \emph{Proceedings of the 31st ACM International Conference on Information \& Knowledge Management}, pp.\  118--127, 2022.

\bibitem[Conneau et~al.(2019)Conneau, Khandelwal, Goyal, Chaudhary, Wenzek, Guzm{\'a}n, Grave, Ott, Zettlemoyer, and Stoyanov]{roberta}
Alexis Conneau, Kartikay Khandelwal, Naman Goyal, Vishrav Chaudhary, Guillaume Wenzek, Francisco Guzm{\'a}n, Edouard Grave, Myle Ott, Luke Zettlemoyer, and Veselin Stoyanov.
\newblock Unsupervised cross-lingual representation learning at scale.
\newblock \emph{arXiv preprint arXiv:1911.02116}, 2019.

\bibitem[Devlin et~al.(2019)Devlin, Chang, Lee, and Toutanova]{bert}
Jacob Devlin, Ming-Wei Chang, Kenton Lee, and Kristina Toutanova.
\newblock Bert: Pre-training of deep bidirectional transformers for language understanding.
\newblock In \emph{Proceedings of the 2019 conference of the North American chapter of the association for computational linguistics: human language technologies, volume 1 (long and short papers)}, pp.\  4171--4186, 2019.

\bibitem[Fawcett(2006)]{ROCAnalysos}
Tom Fawcett.
\newblock An introduction to roc analysis.
\newblock \emph{Pattern Recognition Letters}, 27\penalty0 (8):\penalty0 861--874, 2006.
\newblock ISSN 0167-8655.
\newblock \doi{https://doi.org/10.1016/j.patrec.2005.10.010}.
\newblock URL \url{https://www.sciencedirect.com/science/article/pii/S016786550500303X}.
\newblock ROC Analysis in Pattern Recognition.

\bibitem[Gao et~al.(2019)Gao, He, Tan, Qin, Wang, and Liu]{repdeg}
Jun Gao, Di~He, Xu~Tan, Tao Qin, Liwei Wang, and Tie-Yan Liu.
\newblock Representation degeneration problem in training natural language generation models.
\newblock \emph{arXiv preprint arXiv:1907.12009}, 2019.

\bibitem[Gao et~al.(2021)Gao, Yao, and Chen]{hardneg2}
Tianyu Gao, Xingcheng Yao, and Danqi Chen.
\newblock Simcse: Simple contrastive learning of sentence embeddings.
\newblock \emph{arXiv preprint arXiv:2104.08821}, 2021.

\bibitem[Gao \& Zhou(2012)Gao and Zhou]{gaoconsistency}
Wei Gao and Zhi-Hua Zhou.
\newblock On the consistency of auc pairwise optimization.
\newblock \emph{arXiv preprint arXiv:1208.0645}, 2012.

\bibitem[Gao et~al.(2023)Gao, Xiong, Gao, Jia, Pan, Bi, Dai, Sun, Wang, and Wang]{gao2023retrieval}
Yunfan Gao, Yun Xiong, Xinyu Gao, Kangxiang Jia, Jinliu Pan, Yuxi Bi, Yixin Dai, Jiawei Sun, Haofen Wang, and Haofen Wang.
\newblock Retrieval-augmented generation for large language models: A survey.
\newblock \emph{arXiv preprint arXiv:2312.10997}, 2:\penalty0 1, 2023.

\bibitem[Grill et~al.(2020)Grill, Strub, Altch{\'e}, Tallec, Richemond, Buchatskaya, Doersch, Avila~Pires, Guo, Gheshlaghi~Azar, et~al.]{byol}
Jean-Bastien Grill, Florian Strub, Florent Altch{\'e}, Corentin Tallec, Pierre Richemond, Elena Buchatskaya, Carl Doersch, Bernardo Avila~Pires, Zhaohan Guo, Mohammad Gheshlaghi~Azar, et~al.
\newblock Bootstrap your own latent-a new approach to self-supervised learning.
\newblock \emph{Advances in neural information processing systems}, 33:\penalty0 21271--21284, 2020.

\bibitem[He et~al.(2020)He, Fan, Wu, Xie, and Girshick]{moco}
Kaiming He, Haoqi Fan, Yuxin Wu, Saining Xie, and Ross Girshick.
\newblock Momentum contrast for unsupervised visual representation learning.
\newblock In \emph{Proceedings of the IEEE/CVF conference on computer vision and pattern recognition}, pp.\  9729--9738, 2020.

\bibitem[Hoffer \& Ailon(2015)Hoffer and Ailon]{triplet}
Elad Hoffer and Nir Ailon.
\newblock Deep metric learning using triplet network.
\newblock In \emph{Similarity-based pattern recognition: third international workshop, SIMBAD 2015, Copenhagen, Denmark, October 12-14, 2015. Proceedings 3}, pp.\  84--92. Springer, 2015.

\bibitem[Izacard et~al.(2022)Izacard, Caron, Hosseini, Riedel, Bojanowski, Joulin, and Grave]{contriever}
Gautier Izacard, Mathilde Caron, Lucas Hosseini, Sebastian Riedel, Piotr Bojanowski, Armand Joulin, and Edouard Grave.
\newblock Unsupervised dense information retrieval with contrastive learning.
\newblock \emph{Transactions on Machine Learning Research}, 2022.
\newblock ISSN 2835-8856.
\newblock URL \url{https://openreview.net/forum?id=jKN1pXi7b0}.

\bibitem[J{\"a}rvelin \& Kek{\"a}l{\"a}inen(2002)J{\"a}rvelin and Kek{\"a}l{\"a}inen]{jarvelin2002cumulated}
Kalervo J{\"a}rvelin and Jaana Kek{\"a}l{\"a}inen.
\newblock Cumulated gain-based evaluation of ir techniques.
\newblock \emph{ACM Transactions on Information Systems (TOIS)}, 20\penalty0 (4):\penalty0 422--446, 2002.

\bibitem[Karpukhin et~al.(2020)Karpukhin, Oguz, Min, Lewis, Wu, Edunov, Chen, and Yih]{dpr}
Vladimir Karpukhin, Barlas Oguz, Sewon Min, Patrick~SH Lewis, Ledell Wu, Sergey Edunov, Danqi Chen, and Wen-tau Yih.
\newblock Dense passage retrieval for open-domain question answering.
\newblock In \emph{EMNLP (1)}, pp.\  6769--6781, 2020.

\bibitem[Kulis et~al.(2013)]{metriclearning}
Brian Kulis et~al.
\newblock Metric learning: A survey.
\newblock \emph{Foundations and Trends{\textregistered} in Machine Learning}, 5\penalty0 (4):\penalty0 287--364, 2013.

\bibitem[Kwiatkowski et~al.(2019)Kwiatkowski, Palomaki, Redfield, Collins, Parikh, Alberti, Epstein, Polosukhin, Devlin, Lee, Toutanova, Jones, Kelcey, Chang, Dai, Uszkoreit, Le, and Petrov]{nq}
Tom Kwiatkowski, Jennimaria Palomaki, Olivia Redfield, Michael Collins, Ankur Parikh, Chris Alberti, Danielle Epstein, Illia Polosukhin, Jacob Devlin, Kenton Lee, Kristina Toutanova, Llion Jones, Matthew Kelcey, Ming-Wei Chang, Andrew~M. Dai, Jakob Uszkoreit, Quoc Le, and Slav Petrov.
\newblock Natural questions: A benchmark for question answering research.
\newblock \emph{Transactions of the Association for Computational Linguistics}, 7:\penalty0 452--466, 2019.
\newblock \doi{10.1162/tacl_a_00276}.
\newblock URL \url{https://aclanthology.org/Q19-1026/}.

\bibitem[Lewis et~al.(2020)Lewis, Perez, Piktus, Petroni, Karpukhin, Goyal, K{\"u}ttler, Lewis, Yih, Rockt{\"a}schel, et~al.]{lewis2020retrieval}
Patrick Lewis, Ethan Perez, Aleksandra Piktus, Fabio Petroni, Vladimir Karpukhin, Naman Goyal, Heinrich K{\"u}ttler, Mike Lewis, Wen-tau Yih, Tim Rockt{\"a}schel, et~al.
\newblock Retrieval-augmented generation for knowledge-intensive nlp tasks.
\newblock \emph{Advances in neural information processing systems}, 33:\penalty0 9459--9474, 2020.

\bibitem[Ma \& Collins(2018)Ma and Collins]{noisecontrastiveestimation}
Zhuang Ma and Michael Collins.
\newblock Noise contrastive estimation and negative sampling for conditional models: Consistency and statistical efficiency.
\newblock \emph{arXiv preprint arXiv:1809.01812}, 2018.

\bibitem[Mann \& Whitney(1947)Mann and Whitney]{mann1947test}
Henry~B Mann and Donald~R Whitney.
\newblock On a test of whether one of two random variables is stochastically larger than the other.
\newblock \emph{The annals of mathematical statistics}, pp.\  50--60, 1947.

\bibitem[Moiseev et~al.(2023)Moiseev, Hernandez~Abrego, Dornbach, Zitouni, Alfonseca, and Dong]{samtone}
Fedor Moiseev, Gustavo Hernandez~Abrego, Peter Dornbach, Imed Zitouni, Enrique Alfonseca, and Zhe Dong.
\newblock {S}am{T}o{N}e: Improving contrastive loss for dual encoder retrieval models with same tower negatives.
\newblock In Anna Rogers, Jordan Boyd-Graber, and Naoaki Okazaki (eds.), \emph{Findings of the Association for Computational Linguistics: ACL 2023}, pp.\  12028--12037, Toronto, Canada, July 2023. Association for Computational Linguistics.
\newblock \doi{10.18653/v1/2023.findings-acl.761}.
\newblock URL \url{https://aclanthology.org/2023.findings-acl.761/}.

\bibitem[Moreira et~al.(2024)Moreira, Osmulski, Xu, Ak, Schifferer, and Oldridge]{moreira2024nv}
Gabriel de Souza~P Moreira, Radek Osmulski, Mengyao Xu, Ronay Ak, Benedikt Schifferer, and Even Oldridge.
\newblock Nv-retriever: Improving text embedding models with effective hard-negative mining.
\newblock \emph{arXiv preprint arXiv:2407.15831}, 2024.

\bibitem[Ni et~al.(2021)Ni, Qu, Lu, Dai, {\'A}brego, Ma, Zhao, Luan, Hall, Chang, et~al.]{gtr}
Jianmo Ni, Chen Qu, Jing Lu, Zhuyun Dai, Gustavo~Hern{\'a}ndez {\'A}brego, Ji~Ma, Vincent~Y Zhao, Yi~Luan, Keith~B Hall, Ming-Wei Chang, et~al.
\newblock Large dual encoders are generalizable retrievers.
\newblock \emph{arXiv preprint arXiv:2112.07899}, 2021.

\bibitem[Oord et~al.(2018)Oord, Li, and Vinyals]{infonce}
Aaron van~den Oord, Yazhe Li, and Oriol Vinyals.
\newblock Representation learning with contrastive predictive coding.
\newblock \emph{arXiv preprint arXiv:1807.03748}, 2018.

\bibitem[Radford et~al.(2021)Radford, Kim, Hallacy, Ramesh, Goh, Agarwal, Sastry, Askell, Mishkin, Clark, et~al.]{clip}
Alec Radford, Jong~Wook Kim, Chris Hallacy, Aditya Ramesh, Gabriel Goh, Sandhini Agarwal, Girish Sastry, Amanda Askell, Pamela Mishkin, Jack Clark, et~al.
\newblock Learning transferable visual models from natural language supervision.
\newblock In \emph{International conference on machine learning}, pp.\  8748--8763. PmLR, 2021.

\bibitem[Rajpurkar et~al.(2016)Rajpurkar, Zhang, Lopyrev, and Liang]{squad}
Pranav Rajpurkar, Jian Zhang, Konstantin Lopyrev, and Percy Liang.
\newblock Squad: 100,000+ questions for machine comprehension of text.
\newblock \emph{arXiv preprint arXiv:1606.05250}, 2016.

\bibitem[Thakur et~al.(2021)Thakur, Reimers, Rücklé, Srivastava, and Gurevych]{Thakur2021Beir}
Nandan Thakur, Nils Reimers, Andreas Rücklé, Abhishek Srivastava, and Iryna Gurevych.
\newblock Beir: A heterogenous benchmark for zero-shot evaluation of information retrieval models.
\newblock \emph{arXiv preprint arXiv:2104.08663}, 4 2021.
\newblock URL \url{https://arxiv.org/abs/2104.08663}.

\bibitem[Voorhees et~al.(1999)Voorhees, Tice, et~al.]{voorhees1999trec}
Ellen~M Voorhees, Dawn~M Tice, et~al.
\newblock The trec-8 question answering track evaluation.
\newblock In \emph{TREC}, volume 1999, pp.\ ~82, 1999.

\bibitem[Wang et~al.(2022)Wang, Yang, Huang, Jiao, Yang, Jiang, Majumder, and Wei]{e5}
Liang Wang, Nan Yang, Xiaolong Huang, Binxing Jiao, Linjun Yang, Daxin Jiang, Rangan Majumder, and Furu Wei.
\newblock Text embeddings by weakly-supervised contrastive pre-training.
\newblock \emph{arXiv preprint arXiv:2212.03533}, 2022.

\bibitem[Wang et~al.(2024)Wang, Zhang, and Nguyen]{contconfreg}
Shiqi Wang, Yeqin Zhang, and Cam-Tu Nguyen.
\newblock Mitigating the impact of false negatives in dense retrieval with contrastive confidence regularization.
\newblock In \emph{Proceedings of the Thirty-Eighth AAAI Conference on Artificial Intelligence and Thirty-Sixth Conference on Innovative Applications of Artificial Intelligence and Fourteenth Symposium on Educational Advances in Artificial Intelligence}, pp.\  19171--19179, 2024.

\bibitem[Wang et~al.(2020)Wang, Wei, Dong, Bao, Yang, and Zhou]{minilm}
Wenhui Wang, Furu Wei, Li~Dong, Hangbo Bao, Nan Yang, and Ming Zhou.
\newblock Minilm: Deep self-attention distillation for task-agnostic compression of pre-trained transformers.
\newblock \emph{Advances in neural information processing systems}, 33:\penalty0 5776--5788, 2020.

\bibitem[Xiong et~al.(2020)Xiong, Xiong, Li, Tang, Liu, Bennett, Ahmed, and Overwijk]{acne}
Lee Xiong, Chenyan Xiong, Ye~Li, Kwok-Fung Tang, Jialin Liu, Paul Bennett, Junaid Ahmed, and Arnold Overwijk.
\newblock Approximate nearest neighbor negative contrastive learning for dense text retrieval.
\newblock \emph{arXiv preprint arXiv:2007.00808}, 2020.

\bibitem[Zhai et~al.(2023)Zhai, Mustafa, Kolesnikov, and Beyer]{siglit}
Xiaohua Zhai, Basil Mustafa, Alexander Kolesnikov, and Lucas Beyer.
\newblock Sigmoid loss for language image pre-training.
\newblock In \emph{Proceedings of the IEEE/CVF international conference on computer vision}, pp.\  11975--11986, 2023.

\bibitem[Zhang et~al.(2024)Zhang, Yu, Wang, and Zhang]{arl2}
Lingxi Zhang, Yue Yu, Kuan Wang, and Chao Zhang.
\newblock Arl2: Aligning retrievers for black-box large language models via self-guided adaptive relevance labeling.
\newblock \emph{arXiv preprint arXiv:2402.13542}, 2024.

\bibitem[Zhu et~al.(2022)Zhu, Wu, and Yang]{AUROC_optimization}
Dixian Zhu, Xiaodong Wu, and Tianbao Yang.
\newblock Benchmarking deep auroc optimization: Loss functions and algorithmic choices.
\newblock \emph{arXiv preprint arXiv:2203.14177}, 2022.

\bibitem[Zhuang et~al.(2023)Zhuang, Qin, Jagerman, Hui, Ma, Lu, Ni, Wang, and Bendersky]{rankt5}
Honglei Zhuang, Zhen Qin, Rolf Jagerman, Kai Hui, Ji~Ma, Jing Lu, Jianmo Ni, Xuanhui Wang, and Michael Bendersky.
\newblock Rankt5: Fine-tuning t5 for text ranking with ranking losses.
\newblock In \emph{Proceedings of the 46th International ACM SIGIR Conference on Research and Development in Information Retrieval}, SIGIR '23, pp.\  2308–2313, New York, NY, USA, 2023. Association for Computing Machinery.
\newblock ISBN 9781450394086.
\newblock \doi{10.1145/3539618.3592047}.
\newblock URL \url{https://doi.org/10.1145/3539618.3592047}.

\end{thebibliography}

\clearpage
\appendix

\section{Proof of Lemma 1}\medskip


\begin{restatelemma}[Lemma~\ref{lemma:shift_invariance}]
Let
\[
  \ell_\tau(s^{+},S^{-})
  \;=\;
  -\log\!
  \frac{e^{s^{+}/\tau}}
       {e^{s^{+}/\tau}+\sum_{s\in S^{-}} e^{s/\tau}},
  \qquad \tau>0.
\]

With notations of equation \eqref{eq:infonce} the population loss can be rewritten as:
        \[
          \mathcal{L}_\tau[s]
          \;=\;
          \mathbb{E}_{\,q \sim \mathcal{Q},\;
           p^{+} \sim \mathcal{P}^{+}(\cdot \mid q), \{p^{-}_k\}_{k=1}^{K} \stackrel{\text{i.i.d.}}{\sim} \mathcal{P}^{-}(\cdot \mid q)\; }\!
          \Bigl[\,
            \ell_\tau\bigl(s(q,p^{+}),\,\{s(q,p^{-})\mid p^{-} \in \{p^{-}_k\}_{k=1}^{K} \} \}\bigr)
          \Bigr].
        \]

\begin{enumerate}
        
  \item \textbf{Shift‑invariance.}  
        For any measurable offset \(g\colon\mathcal{Q}\!\to\!\mathbb{R}\),
        define the shifted scorer
        \[
          s_g(q,d)\;=\;s(q,d)+g(q).
        \]
        Then \(\mathcal{L}_\tau[s_g]=\mathcal{L}_\tau[s]\).

  \item \textbf{Arbitrary degradation of AoC.}  
        If \(\lvert s(q,d)\rvert \le M < \infty\), then for every
        \(\varepsilon>0\) there exists an offset \(g\) such that the
        Area‑over‑ROC (AoC) defined as :
        \[
          \operatorname{AoC}[s]
          \;=\;
          \Pr_{q_1, q_1 \sim \mathcal{Q}, p^{+} \sim \mathcal{P}^{+}(.|q_1), p- \sim \mathcal{P}^{-}(.|q_2)}
          \bigl[s(q_1,p^{+}) < s(q_2,p^{-})\bigr]
        \]
        satisfies \(\operatorname{AoC}[s_g] \ge 0.5 - \varepsilon\).
        Hence global positive–negative separation can be made
        arbitrarily poor without altering the Contrastive Loss.
\end{enumerate}
\end{restatelemma}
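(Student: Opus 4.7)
For Part 1 (shift-invariance), the plan is to verify that the loss is unchanged pointwise before taking expectations. Fix a query $q$ and write out $\ell_\tau$ under the shifted scorer $s_g$: every similarity appearing in the softmax—both the positive score $s(q,p^{+})+g(q)$ and every negative score $s(q,p^{-}_k)+g(q)$—receives the \emph{same} additive offset $g(q)$. Hence $e^{g(q)/\tau}$ can be factored out of the numerator and of every term in the denominator of the softmax and cancels. This yields $\ell_\tau(s^{+}_g,S^{-}_g)=\ell_\tau(s^{+},S^{-})$ as a pointwise identity, so taking expectations over $q,p^{+},\{p^{-}_k\}$ gives $\mathcal{L}_\tau[s_g]=\mathcal{L}_\tau[s]$.

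For Part 2 (AoC degradation), the key identity is
\[
 s_g(q_1,p^{+})-s_g(q_2,p^{-})\;=\;\underbrace{s(q_1,p^{+})-s(q_2,p^{-})}_{=:\,\Delta_s\,\in\,[-2M,2M]}\;+\;\underbrace{g(q_1)-g(q_2)}_{=:\,\Delta_g},
\]
which holds because, unlike in the single-query softmax above, the two queries $q_1,q_2$ are sampled \emph{independently} and therefore receive different offsets. The strategy is to engineer $g$ so that $\Delta_g$ dominates $\Delta_s$ whenever it is nonzero and has a near-symmetric sign distribution. Concretely, partition $\mathcal{Q}$ into $k$ measurable sets $A_1,\dots,A_k$ of equal mass $1/k$ under the query distribution and define $g(q)=iL$ for $q\in A_i$, with $L>2M$. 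Then on the event $\{g(q_1)=g(q_2)\}$ (probability $1/k$) the shifted difference equals $\Delta_s$ and contributes $\operatorname{AoC}[s]$; on $\{g(q_1)>g(q_2)\}$ (probability $(k-1)/(2k)$ by symmetry) we have $\Delta_g\ge L>2M$ so the sum is strictly positive and contributes $0$; symmetrically, $\{g(q_1)<g(q_2)\}$ contributes $1$ with probability $(k-1)/(2k)$. Summing gives
\[
 \operatorname{AoC}[s_g]\;=\;\tfrac{1}{k}\,\operatorname{AoC}[s]\;+\;\tfrac{k-1}{2k},
\]
and choosing $k\ge 1/(2\varepsilon)$ makes the right-hand side at least $\tfrac{k-1}{2k}\ge 0.5-\varepsilon$, as required.

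The main obstacle I anticipate is not the calculation itself but the existence of the equipartition $A_1,\dots,A_k$. For query distributions that are atomless (or admit any atomless subcomponent of mass at least $1-1/k$), the partition exists by a standard Darboux/intermediate-value argument on the cumulative distribution; for distributions with large atoms the bound $(k-1)/(2k)$ would be replaced by a slightly weaker one, which is still sufficient since one only needs $(k-1)/(2k)$ to exceed $0.5-\varepsilon$. I would resolve this by either stating a mild regularity assumption on $\mathcal{Q}$ or by replacing the exact equipartition with an approximate one (mass $1/k\pm\delta$) and absorbing the error into $\varepsilon$. Everything else reduces to the casework above, which goes through verbatim once $L$ is chosen larger than $2M$ to force $|\Delta_g|>|\Delta_s|$ off the diagonal.
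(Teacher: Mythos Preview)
Your Part~1 is the same as the paper's: factor $e^{g(q)/\tau}$ out of numerator and denominator and cancel.

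Your Part~2 takes a genuinely different route. The paper picks $g(q)\sim\mathcal N(0,\sigma^2)$ independently across queries, writes $s_g(q_1,p^{+})-s_g(q_2,p^{-})$ as a bounded term plus an $\mathcal N(0,2\sigma^2)$ term, and lets $\sigma\to\infty$ so that the Gaussian dominates and the probability of the wrong order tends to $1/2$. Your construction is deterministic: partition $\mathcal Q$ into $k$ equal-mass cells, set $g\equiv iL$ on cell $A_i$ with $L>2M$, and read off $\operatorname{AoC}[s_g]\ge(k-1)/(2k)$ from the off-diagonal events alone. Both arguments exploit the same mechanism (make $|\Delta_g|$ overwhelm the bounded $|\Delta_s|$); yours gives an explicit $g$ and an exact finite-$k$ bound, while the paper's is a one-line limiting argument that implicitly needs either a probabilistic-method extraction step or an atomless $\mathcal Q$ to make $g(q_1)-g(q_2)$ genuinely $\mathcal N(0,2\sigma^2)$---issues analogous to the equipartition obstacle you already flag.

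One small imprecision in your write-up: the displayed identity $\operatorname{AoC}[s_g]=\tfrac{1}{k}\operatorname{AoC}[s]+\tfrac{k-1}{2k}$ is not exact, because conditioning on $\{q_1,q_2\text{ in the same cell}\}$ does not reproduce the unconditional $\operatorname{AoC}[s]$. This is harmless, since your final inequality uses only the off-diagonal contribution $(k-1)/(2k)$ and discards the (nonnegative) diagonal term; just state it as the lower bound $\operatorname{AoC}[s_g]\ge(k-1)/(2k)$ rather than an equality.
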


\begin{proof}
\leavevmode

\paragraph{(i) Shift‑invariance.}
Fix \(q\) and abbreviate
\(
  Z = e^{s^{+}/\tau}+\sum_{s\in S^{-}} e^{s/\tau}.
\)
Adding \(g(q)\) to \emph{every} score gives
\[
  Z_g = e^{(s^{+}+g)/\tau}
        +\sum_{s\in S^{-}} e^{(s+g)/\tau}
      = e^{g/\tau} Z,
\]
while the numerator of \(\ell_\tau\) is also multiplied by \(e^{g/\tau}\).
Hence \(\ell_\tau\) is unchanged and so is the expectation
\(\mathcal{L}_\tau\).

\paragraph{(ii) Arbitrary AoC degradation.}
Draw independent offsets \(g(q)\sim\mathcal{N}(0,\sigma^{2})\).
For two independent samples \((q_1,p^{+})\) and \((q_2,p^{-})\), we have:
\[
  s_g(q_1,p^{+}) - s_g(q_2,p^{-})
  = \underbrace{s(q_1,p^{+})-s(q_2,p^{-})}_{\text{bounded by }2M}
    + \underbrace{g(q_1)-g(q_2)}_{\mathcal{N}(0,2\sigma^{2})}.
\]
As \(\sigma\to\infty\) the Gaussian term dominates, so
\(
  \Pr[s_g(q_1,p^{+})<s_g(q_2,p^{-})] \to 0.5.
\)
Choose \(\sigma\) large enough that the probability exceeds
\(0.5-\varepsilon\); this yields the required \(g\).

\end{proof}
\clearpage
\section{Ablation over Hard Negatives, LR, and Batch Size}

We conduct controlled ablations on the sensitivity of training on three hyper parameters, namely the learning rate, batch size and number of hard negatives. Our study is limited to training on the MiniLM case. For brevity, we show the results with the best learning rate for each batch size.

\begin{table}[ht]
\centering
\caption{Results across hyperparameter settings for CL loss.}
\setlength{\tabcolsep}{4pt} 
\begin{tabular}{cccccccc}
\toprule
  \textbf{lr} & \textbf{batch\_size} & \textbf{hard\_negative} & \textbf{precision@10} & \textbf{recall@1} & \textbf{MRR} & \textbf{nDCG@10} & \textbf{AUC} \\
\midrule
 3e-5 & 64  & 3 & 0.10 & 0.36 & 0.55 & 0.63 & 0.90 \\
 3e-5 & 64  & 5 & 0.10 & 0.37 & 0.55 & 0.64 & 0.90 \\
 3e-5 & 64  & 7 & 0.10 & 0.37 & 0.55 & 0.63 & 0.90 \\
 5e-5 & 128 & 3 & 0.10 & 0.36 & 0.55 & 0.63 & 0.89 \\
 5e-5 & 128 & 5 & 0.10 & 0.36 & 0.54 & 0.62 & 0.88 \\
 5e-5 & 128 & 7 & 0.10 & 0.38 & 0.56 & 0.64 & 0.92 \\
\bottomrule
\end{tabular}
\end{table}

\begin{table}[ht]
\centering
\caption{Results across hyperparameter settings for MW loss.}
\setlength{\tabcolsep}{4pt} 
\begin{tabular}{ccccccccc}
\toprule
  \textbf{lr} & \textbf{batch\_size} & \textbf{hard\_negative} & \textbf{precision@10} & \textbf{recall@1} & \textbf{MRR} & \textbf{nDCG@10} & \textbf{AUC} \\
\midrule
 
 3e-5 & 64  & 3 & 0.10 & 0.35 & 0.54 & 0.62 & 0.96 \\
 3e-5 & 64  & 5 & 0.10 & 0.35 & 0.54 & 0.63 & 0.96 \\
 3e-5 & 64  & 7 & 0.10 & 0.35 & 0.54 & 0.62 & 0.96 \\
 3e-5 & 128 & 3 & 0.10 & 0.34 & 0.53 & 0.62 & 0.97 \\
 3e-5 & 128 & 5 & 0.10 & 0.34 & 0.53 & 0.62 & 0.96 \\
 3e-5 & 128 & 7 & 0.10 & 0.34 & 0.53 & 0.62 & 0.97 \\
 
\bottomrule
\end{tabular}
\end{table}
 The tables indicate that MW loss is rather stable with respect to changes in batch size and number of hard negative examples while CL loss is more sensitive.   

\section{Convergence analysis}

In this section we provide results on the number of training steps until the the best checkpoint. 

\begin{table}[ht]
\centering
\caption{\textbf{Training steps at best checkpoint} for each dataset and loss (MW vs CL), across model sizes.}
\begin{tabular}{l|cccc}
\toprule
\textbf{Model} & \textbf{NLI} & \textbf{MS MARCO} & \textbf{SQuAD} & \textbf{NQ} \\
\midrule
Small (MW)  & 16000 &  5000 & 2000 & 9500 \\
Small (CL)  & 14000 &  3500 & 1000 &  7000 \\
\midrule
Base (MW)   & 24000 &  3500 & 4000 & 12500 \\
Base (CL)   & 20000 &  1500 & 2500 & 11000 \\
\midrule
Large (MW)  & 21500 &  4500 & 2500 & 10000 \\
Large (CL)  & 18000 &  3000 & 2000 &  8000 \\
\bottomrule
\end{tabular}
\label{tab:best_ckpt_steps_rows}
\end{table}

Based on these results MW has a slower convergence rate. While this can be seen as a shortcumming of MW we hypothesize that this is what causes this loss function to perform better. Particularly, we believe that by targeting a harder objective, learning a global metric rather than a conditional, MW targets a stronger objective which is harder to achieve which results in a slower convergence rate, at the same time this stronger objective has better generalization and outperforms the contrastive loss.

\end{document}